\newcounter{thm}
\newcounter{lmm}
\newcounter{coro}
\newcounter{propo}
\newcounter{defin}
\newcounter{exmpl}
\newcounter{princ}
\newenvironment{theorem}[1][]{\par\vspace{\baselineskip}\noindent\refstepcounter{thm}\textbf{Theorem \thethm #1: }}{\vspace{\baselineskip}}
\newenvironment{lemma}[1][]{\par\vspace{\baselineskip}\noindent\refstepcounter{lmm}\textbf{Lemma \thelmm #1: }}{\vspace{\baselineskip}}
\newenvironment{definition}[1][]{\par\vspace{\baselineskip}\noindent\refstepcounter{defin}\textbf{Definition \thedefin #1: }}{\vspace{\baselineskip}}
\newenvironment{example}{\par\vspace{\baselineskip}\noindent\refstepcounter{exmpl}\textbf{Example \theexmpl: }}{\vspace{\baselineskip}}
\newenvironment{proof}{\paragraph{Proof:}}{\hfill$\square$}
\title{Regular Typed Unification}
\author{Jo\~ao Barbosa \quad\quad Mário Florido
\institute{Departamento de Ci\^encia de Computadores, Faculdade de Ci\^encias, Universidade do Porto\\
Rua do Campo Alegre s/n, 4169--007 Porto, Portugal}
\institute{LIACC - Laborat\'orio de Intelig\^encia Artificial e Ci\^encia de Computadores}
\and
Vítor Santos Costa
\institute{Departamento de Ci\^encia de Computadores, Faculdade de Ci\^encias, Universidade do Porto\\
Rua do Campo Alegre s/n, 4169--007 Porto, Portugal}
\institute{INESCTEC - Instituto de Engenharia de Sistemas e Computadores, Tecnologia e Ciência}}
\begin{document}

\maketitle
\begin{abstract} 
Here we define a new unification algorithm for terms interpreted in semantic domains denoted by a subclass of regular types here called deterministic regular types. This reflects our intention not to handle the semantic universe as a homogeneous collection of values, but instead, to partition it in a way that is similar to data types in programming languages. 
We first define the new unification algorithm which is based on constraint generation and constraint solving, and then prove its main properties: termination, soundness, and completeness with respect to the semantics. Finally, we discuss how to apply this algorithm to a dynamically typed version of Prolog.
\end{abstract}

\section{Introduction}

In mathematical logic, a {\em term} denotes a mathematical object, and a
theory of equality on the set of all terms formally defines which
terms are considered equal. In logic programming terms are a syntactic
representations of structured data such that in the typical case of
first order languages, only syntactically identical terms are
considered equal. Functions are thus uninterpreted or,
computationally, functions build data terms, rather than operating on
them.

When the domain of discourse contains elements of different kinds, it is useful to split the set of all terms (Universe) accordingly. 
To this end, a {\em type} (sometimes also called {\em sort}) is assigned to variables and constant symbols, and a declaration of the domain type and range type to each function symbol. A typed term $f(t_1,...,t_n)$ may then be composed from the subterms $t_1,...,t_n$ only if the i-th subterm's type matches the declared i-th domain type of $f$. Such a term is called {\em well-typed} and terms which are not well-typed are called {\em ill-typed}.

Previous approaches for types in logic programming use {\em regular types} as the type language. Some examples of that work are the works by Zobel \cite{DBLP:conf/iclp/Zobel87,DBLP:books/mit/pfenning92/DartZ92}, Mishra \cite{Mishra84}, Yardeni \cite{YARDENI1991}, Fruhwirth \textit{et al}. \cite{FruhwirthSVY91}, Codish \cite{CodishL00}, Schrijvers \textit{et al}. \cite{schrijvers2008towards}, Bruynooghe \cite{SchrijversBG08}, Gallagher \cite{gallagher2004abstract}, Hermenegildo \textit{et al.} \cite{Hermenegildo00C23}, and Barbosa \textit{et al.} \cite{BarbosaFloridoCosta21}, among others.  

Data type definitions in programming languages impose constraints to the type language to allow decidable type checking. Namely, data types are recursive definitions where constructors are unique. Here we use such a subset of regular types that we shall call {\em deterministic regular types}, where each type constructor (here called {\em type function symbol}) is unique.

Type checking is most often done at compile-time, in order to ensure that  program execution will not generate type errors at run-time. If typing is not possible, type errors may occur as the values of one or more arguments may not be in the expected domain.
This work stems from the observation that logic programming systems will indeed output type errors in arguments of primitive predicates, but that there is no way to check if unification of terms from different domains occurs. As a result, a program may generate several (unreported) errors and still succeed. The user may receive an unexpected answer, while having no insight on the existence of actual execution errors, thus making it difficult to detect and resolve program bugs.
With this motivation in mind, we design a typed unification algorithm for
typed first order theories, where types are described by deterministic regular types.
This new unification algorithm may return three different results: a {\em most general unifier}, {\em failure} or {\em wrong}. This last value {\em wrong} is inspired by a similar notion used by Robin Milner to denote run-time type errors in functional programs \cite{DBLP:journals/jcss/Milner78} and, in our framework, it corresponds to the unification of terms that can never belong to the same semantic domains.
A function now, may map integers to integers, integers to lists, floats to lists of integers, and, thus, the Herbrand universe is now divided into many different domains.

\begin{example}
Let $cons/2$ be the list constructor (in Prolog it would be denoted by $./2$) with type $\forall \alpha. \alpha \times list(\alpha) \to list(\alpha)$, where $list(\alpha) = [~] + [\alpha~|~list(\alpha)]$ ($+$ denotes type union). If we have terms $t_1 = cons(1,X)$ and $t_2 = cons(Y,2)$. These terms unify using first order (untyped) unification, but do not have a correct type, since the second argument of the list constructor must be a list. This is captured by the typed unification algorithm since it outputs {\em wrong}.
\end{example}

{\bf Contributions} Our main contributions are: an extension to the semantics defined in \cite{BarbosaFloridoCosta22}, where equality takes into account the domains of the two terms in the left and the right hand side of an equation, being {\em wrong} when terms belong to disjoint domains; a {\em type system} for terms and the equality predicate which we prove to be sound with respect to the semantic typing relation; and a new unification algorithm, which given an equation between two terms returns a most general unifier for them and a {\em principal type}, if there is a solution, {\em false} if there is no solution but terms can belong to the same semantic domain and {\em wrong} otherwise. 
This three stage framework (first a notion of semantic typing, then a type system for terms and equations sound with respect to semantic typing and, finally, a unification algorithm sound and complete with respect to the type system) 
enables us to smoothly prove soundness and completeness of our unification algorithm, and it is inspired by the type theory in \cite{DBLP:journals/jcss/Milner78}. 

\section{Term Syntax and Semantics}\label{S2}

Here we define the language of terms, following \cite{Apt:1996:LPP:249573,Lloyd:1984:FLP:2214}.
Given an infinite set of variables \textbf{VAR} and an infinite set of function symbols \textbf{FUNC}, a term is:
\begin{enumerate}
    \item a variable (\texttt{X}, \texttt{Y}, \texttt{Xi}, \dots);
    \item a function symbol of arity 0 (\texttt{k}, \texttt{a}, \texttt{b}, \texttt{1}, \dots), which we call a constant;
    \item a function symbol of arity $n\geq1$ (\texttt{f}, \texttt{g}, \texttt{h}, \dots) applied to an n-tuple of terms.
\end{enumerate}

We call terms that contain no variables \textit{ground terms}, and terms that start with a function symbol with arity $n\geq1$ \textit{complex terms}.

Following the standard Herbrand interpretation of logic programs \cite{Apt:1996:LPP:249573,Lloyd:1984:FLP:2214}, we assume that every ground term represents a tree and that all these trees are part of the universe of interpretation of the logic program.

We assume a particular partition of the universe into several domains. This interpretation groups sets of trees in the universe into domains, and includes some other domains that are not consisting of trees. We divide the universe \textbf{U} into domains as follows:
$$\mathbf{U} = \mathbf{Int} \cup \mathbf{Flt} \cup \mathbf{Str} \cup \mathbf{Atm} \cup \mathbf{List_1} \cup \dots \cup \mathbf{List_n} \cup \mathbf{A_1} \cup \dots \cup \mathbf{A_{m}} \cup \mathbf{Bool} \cup \mathbf{F} \cup \mathbf{Wrong},$$

\noindent where \textbf{Int} is the set of trees that represent integers (examples include \texttt{1} and \texttt{-10}, but also trees such as \texttt{1 + 4} and \texttt{2 * 5 - 1}), \textbf{Flt} is the set of trees that represent floating-point numbers, \textbf{Str} is the set of trees representing strings, \textbf{Atm} is the set of trees consisting of a single node, the root, that are not included in any other domain, $\mathbf{List_i}$ are sets of trees that represent lists, where each domain contains the trees that represent lists of elements of some other domain (i.e., we have a domain for lists of integers, lists of strings, lists of lists of integers, \dots), $\mathbf{A_i}$ are the domains of trees whose root is a function symbol and the nodes of each tree are in the same domain as the corresponding nodes of every other tree (examples: $f(Int)$, $g(Int, Float)$, $h(g(Atom),h(Int))$, \dots), \textbf{Bool} is the set with \textit{true} and \textit{false}, \textbf{F} is the set of functions, and \textbf{Wrong} is the set with the single value, \textit{wrong}. We call {\em base domains} the domains \textbf{Int}, \textbf{Flt}, \textbf{Str}, and \textbf{Atm}.

One important note here is the value \texttt{[]}, corresponding to the empty list. We assume that this value belongs to every list domain, and that it is the only value that belongs to more than one domain in this partition.

We are using the domains for lists as an example of an interesting division of the universe that will later on correspond to inductively defined types. We could easily extend this partition by adding domains for other data types such as binary trees. We believe that any further domains for structured data can be extended easily following the approach we have for lists.

The semantics of a term is a tree in some domain, or \textit{wrong}. The semantics depends on an interpretation $I$ for the function symbols in the language, and a state $\Sigma$ which associates variables to semantic values. We assume that the value returned by $I$ is, for constants, a tree with just a root, and for function symbols of arity $n\geq1$ a function in \textbf{F} that outputs a tree. Without loss of generality we assume that the only function symbol which does not have an Herbrand interpretation is the list constructor, thus for all function symbols \texttt{f} except the list constructor \texttt{cons}, the corresponding function in $I$ is a function \textit{f} that has signature \textit{f} $: \forall \alpha_1,\dots,\alpha_n. \alpha_1 \times \dots \times \alpha_n \to f(\alpha_1,\dots,\alpha_n)$, such that if any of the arguments the function is applied to is $wrong$ then it outputs $wrong$, otherwise it outputs the tree with root $f$ and children the trees it got as input. For the list constructor \texttt{cons} the function associated in $I$ is \textit{cons} with signature \textit{cons} $: \forall \alpha. \alpha \times list(\alpha) \to list(\alpha)$ defined as:
\begin{equation*}
    \textit{cons}(v_1,v_2) =
    \begin{cases}
      cons(v_1,v_2) & \text{if }v_1 \in D \wedge v_2 \in List(D)\\
      wrong & \text{otherwise}  \\
    \end{cases}
 \end{equation*}

The predefined interpretation $I$ is the one where every constant has the expected value, for instance the term \texttt{1} has as value the integer $1$. One additional useful definition is the function $dom$ that returns a set of domain of a value, a singleton set for all values but $[~]$.

We define the semantics of a term, represented by $|[~|]_{I,\Sigma}$, in the following way:
\begin{itemize}
    \item $|[X|]_{I,\Sigma} = \Sigma(X)$
    \item $|[k|]_{I,\Sigma} = I(k)$
    \item $|[f(t_1,\dots,t_n)|]_{I,\Sigma} = I(f) (|[t_1|]_{I,\Sigma},\dots,|[t_n|]_{I,\Sigma})$
\end{itemize}

Note that, if a complex term contains the list constructor, the semantics of that term can be $wrong$. This is where the division into domains comes into play, since if we were considering an undivided Herbrand universe, then trivially all values are in the same domain so the application of a function could never generate an error. Informally, our approach supports the division of the universe implicit on the abstract data types definitions in the Prolog ISO standard (4.2)~\cite{deranseratbook96}.



We assume only one predicate, equality, hereby represented by $=$. The semantics of equality is re-adjusted to take into account the value $wrong$. Equality is defined for terms in the same domain. So let function \textit{eq} with signature \textit{eq} $:\forall \alpha. \alpha \times \alpha \to Bool$ be defined as follows:
\begin{equation*}
    \textit{eq}(v_1,v_2) =
    \begin{cases}
      true & \text{if }v_1 = v_2 \wedge dom(v_1) \cap dom(v_2) \neq \emptyset \wedge dom(v_1) \neq \{Wrong\}\\
      false & \text{if }v_1 \neq v_2 \wedge dom(v_1) \cap dom(v_2) \neq \emptyset \wedge dom(v_1) \neq \{Wrong\} \\
      wrong & \text{otherwise} \\
    \end{cases}
 \end{equation*}

\noindent
The semantics for the equality predicate is then as follows:
    $$|[t_1 = t_2|]_{I,\Sigma} = eq(|[t_1|]_{I,\Sigma},|[t_2|]_{I,\Sigma})$$


\section{Types}

Types are syntactic descriptions of semantic domains. The alphabet for the language of types includes an infinite set of type variables \textbf{TVar}, a finite set of base types \textbf{TBase}, an infinite set of type function symbols \textbf{TFunc}, an infinite set of type symbols \textbf{TSym}, parenthesis, and the comma. There is a one-to-one correspondence between \textbf{TFunc} and \textbf{FUNC}, which we assume is predefined.
Then, we have the following grammar for types:\\

\begin{tabular}{l c l}
$all\_type$  & ::= & $cons\_type~|~func\_type$\\
$cons\_type$ & ::= & $type~|~type\_term~|~bool$\\
$func\_type$ & ::= & $type_1 \times \dots \times type_n \to type~|~type_1 \times \dots \times type_n \to bool$\\
$type$       & ::= & $tvar~|~tbase~|~tsymbol(type_1,\dots,type_n)$\\
$type\_term$ & ::= & $tconstant~|~tfunction(cons\_type_1,\dots,cons\_type_n)$\\
$type\_def$  & ::= & $tsymbol(tvar_1,\dots,tvar_n) --> type\_term_1 + \dots + type\_term_m$
\end{tabular}\\

\noindent
where $tvar \in$ \textbf{TVar}, $tbase \in$ \textbf{TBase}, $tconstant$ and $tfunction \in$ \textbf{TFunc}, and $tsymbol \in$ \textbf{TSym}. We call a type term that starts with a $tfunction$ a complex type term. We call \emph{ground} to any type that does not contain a type variable. 

Each type symbol is defined by a {\em type definition}. A {\em well-formed} type definition has all type variables that occur as parameters on the left-hand side of the definition be distinct and occurring somewhere on the right-hand side, and all type variables that occur on the right-hand side be a parameter on the left-hand side. The sum $\tau_1 + \dots + \tau_n$ is a {\em union type}, describing values that have one of the type terms $\tau_1,\ldots,\tau_n$, called the {\em summands}. The `+' is an idempotent, commutative, and associative operation.

A set of type definitions $D$ is called {\em deterministic} if it is well-formed and any type function symbol occurs at most once in $D$. In \cite{DBLP:books/mit/pfenning92/DartZ92}, the authors introduce the concept of \textit{deterministic type definition}. Our definition is stricter than this previous one by disallowing base types and variables as summands in type definitions and disallowing more than one occurrence of any function symbol in the whole set of type definitions.

Deterministic type definitions include tuple-distributive types \cite{DBLP:conf/iclp/Zobel87, Mishra84} and correspond to the widely used algebraic data types in programming languages. From now on we assume that type definitions are deterministic.

A \emph{type scheme} $\sigma$ is an expression of the form $\forall \alpha_1,\dots,\alpha_n.\tau$, where $\tau$ is a $type$ or a $func\_type$ and $\alpha_1,\dots,\alpha_n$ are type variables which will be called the {\em generic variables} of $\sigma$. If $\tau$ has no variables, then it is itself a type scheme. Note that types form a subclass of type schemes. We will abbreviate type schemes to $\forall \vec{\alpha}.\tau$, where $\vec{\alpha}$ denotes a sequence of several type variables $\alpha_i$. Type
 schemes represent parametric polymorphic types \cite{DBLP:conf/popl/DamasM82}.

\section{Semantics}
Each instance of a $type$ is associated with a domain. A base type is associated with a base domain, and each instance of a type of the form $tsymbol(type_1,\dots,type_n)$ is associated with a domain. We include a type symbol $list$ that is associated with the domains for lists. We assume that the definition for the type symbol $list$ is: $list(\alpha) --> [~] + cons(\alpha,list(\alpha))$. We could include further type symbols that were defined by inductive definitions, besides lists, and the rest of this paper could be easily extended to include different inductively defined types, but we keep {\em list} as the only one for the sake of simplicity.

A valuation $\psi$ maps each type variable to a ground type. Given a valuation $\psi$, we define the semantics of a $type$ and the type $bool$ as follows:\\

\noindent
$\mathbf{T}|[\alpha|]_{\psi} =$ $\mathbf{T}|[\psi(\alpha)|]_{\psi}$\\
$\mathbf{T}|[int|]_{\psi} =$ \textbf{Int}\\
$\mathbf{T}|[bool|]_{\psi} =$ \textbf{Bool}\\
$\mathbf{T}|[float|]_{\psi} =$ \textbf{Flt}\\
$\mathbf{T}|[string|]_{\psi} =$ \textbf{Str}\\
$\mathbf{T}|[atom|]_{\psi} =$ \textbf{Atm}\\
$\mathbf{T}|[list(\alpha)|]_{\psi}$ = $\mathbf{T}|[list(\psi(\alpha))|]_{\psi}$\\
$\mathbf{T}|[list(int)|]_{\psi} = List_i$, where $List_i$ is the domain for lists of integer. Similarly for any other ground instance of $list(\alpha)$ and the corresponding domain $List_j$.\\

The semantics of a $type\_term$ is:\\
$\mathbf{T}|[k|]_{\psi} = \{k\}$\\
$\mathbf{T}|[f(\tau_1,\dots,\tau_n)|]_{\psi}$ = $\{ f(v_1,\dots,v_n)~|~v_i \in \mathbf{T}|[\tau_i|]_{\psi}\}$\\

The semantics of a \textit{union type} is:\\
$\mathbf{T}|[\tau_1 + \dots + \tau_n|]_{\psi} = \mathbf{T}|[\tau_1|]_{\psi} \cup \dots \cup \mathbf{T}|[\tau_n|]_{\psi}$\\

The semantics of a $func\_type$ is:\\
$\mathbf{T}|[\tau_1 \times \dots \times \tau_n \to \tau|]_{\psi} = 
\{ f~|~f \in \mathbf{F} \wedge (v_1 \in \mathbf{T}|[\tau_1|]_{\psi} \wedge \dots \wedge v_n \in \mathbf{T}|[\tau_n|]_{\psi} \implies f(v_1,\dots,v_n) \in \mathbf{T}|[\tau|]_{\psi}) \}$\\

The semantics of a \emph{type scheme} is:\\
$\mathbf{T}|[\forall \vec{\alpha}.\tau|]_{\psi} = \bigcap_{\forall\vec{\sigma}} \mathbf{T}|[\tau[\vec{\alpha}\mapsto\vec{\sigma}]|]_{\psi}$, where $\vec{\sigma}$ is a sequence of $types$ of the same size as $\vec{\alpha}$.\\

Note that the semantics of a (ground) $type\_term$ may be a domain, as in the case of $f(int,float)$, or the subset of a domain, as in the case of $cons(int,[~])$, or even a subset of several domains, as in the case of $[~]$. This includes the domain \textbf{Wrong}, as in the case of $cons(int,int)$. All instances of complex type terms whose $tfunction$ is not the list constructor are associated with a domain for trees.

Also note that we assume a function, given by $I$, for the interpretation of function symbols, thus functions have type signatures: the type of a function symbol $f$ of arity $n$ is interpreted as a function which builds a tree of root $f$, with the type scheme $\forall \alpha_1,\dots,\alpha_n. \alpha_1 \times \dots \times \alpha_n \to f(\alpha_1,\dots,\alpha_n)$. The semantics for this type scheme is the intersection of the semantics for all instances of the functional type, which is a subset of \textbf{F} consisting of all functions that have such type. So it consists of all the functions that can have any tuple of n elements as input and output a tree whose root is $f$ and the children nodes are the input elements.

\subsection{Semantic Typing}

We now define what it means for a term to semantically have a type, denoted by $t : \tau$. If the term and the type are both ground, given an interpretation $I$, we just check whether the semantics of the term belongs to the domain corresponding to the semantics of the type. So, for ground terms and types:
$$t : \tau \implies \forall \Sigma.\forall \psi. |[t|]_{I,\Sigma} \in \mathbf{T}|[\tau|]_{\psi}$$

However, both terms and types can be non-ground in general and, without extra information, we cannot know what is the correct type for a variable. To deal with variables we introduce the concept of a context $\Gamma$, defined as a set of typings of the form $X : \tau$ for variables. Given a context we define the {\em semantic typing} relation, denoted by $\models$, as:
$$\Gamma \models_{I} t:\tau \implies \forall \Sigma.\forall \psi. (\forall (X:\tau\prime) \in \Gamma. |[X|]_{I,\Sigma} \in \mathbf{T}|[\tau\prime|]_{\psi} \implies  |[t|]_{I,\Sigma} \in \mathbf{T}|[\tau|]_{\psi})$$

We call the \textit{generic context} to the context that contains $X_i : \alpha_i$, for all variables, i.e., all term variables are typed by a type variable, and each type variable is associated with a particular variable. Note that throughout the paper we will use the symbol $\models$ overloaded for other semantics relations.

\begin{example}
Let $\Gamma = \{\texttt{X}:\alpha,\texttt{Y} : list(\alpha)\}$ and the type signature for \texttt{cons} in $I$ be $\{\texttt{cons} : \forall \alpha. \alpha \times list(\alpha) \to list(\alpha)\}$.
$$\Gamma \models_{I} \texttt{cons(X,Y)} : list(\alpha)$$
Suppose we have a state $\Sigma$ and a valuation $\psi$ such that $|[\texttt{X}|]_{I,\Sigma} \in \mathbf{T}|[\alpha|]_{\psi}$ and $|[\texttt{Y}|]_{I,\Sigma} \in \mathbf{T}|[list(\alpha)|]_{\psi}$, then $|[ \texttt{cons(X,Y)} |]_{I,\Sigma} \in \mathbf{T}|[list(\alpha)|]_{\psi}$. Since $|[\texttt{X}|]_{I,\Sigma} = \Sigma(X) \in \mathbf{T}|[\psi(\alpha)|]_{\psi}$ and $|[\texttt{Y}|]_{I,\Sigma} = \Sigma(Y) \in \mathbf{T}|[list(\alpha)|]_{\psi} = \mathbf{T}|[list(\psi(\alpha))|]_{\psi}$, by the semantics of \texttt{cons}, we have $cons(\Sigma(X),\Sigma(Y))$, which is not $wrong$ from the domains of the respective values, and because the output is in the correct domain.\\
However, note that for $\Gamma\prime = \{\texttt{X}:\alpha,\texttt{Y} : \beta\}$, the same would not be true, since for $\Sigma = [X \mapsto 1, Y \mapsto 2]$ and $\psi = [\alpha \mapsto int, \beta \mapsto int]$, the left-hand side of the implication is true, but $cons(1,2) \notin \mathbf{T}|[list(int)|]_{\psi}$.
\end{example}

\section{Syntactic Typing}

Syntactic typing is defined by a type system.
A context $\Gamma$ and a set of type assumptions for constants and function symbols $\Delta$ are needed to derive a type assignment and one writes $\Gamma,\Delta \vdash t:\tau$ (pronounce this as $\Gamma$ and $\Delta$ yield $t$ in $\tau$).
Assumptions in $\Delta$ are of the form $k:\forall \vec{\alpha}.\tau$, for constants, and $f : \forall \vec{\alpha}.\tau_1\times \dots \times \tau_n \to \tau$, for function symbols, where the generic variables $\vec{\alpha}$ of these type schemes are exactly the type variables that occur in $\tau$ and $\tau_1 \times \dots \times \tau_n \to \tau$, respectively. 
A statement $t:\tau$ is {\em derivable} from contexts $\Gamma$ and $\Delta$, notation $\Gamma,\Delta \vdash t:\tau$, if it can be produced by the rules in Figure \ref{TSystem}. If we have a derivation in the type system, then we say that $t$ has type $\tau$ in contexts $\Gamma$ and $\Delta$.

\begin{figure*}[!ht]
\begin{mathpar}

\inferrule* [left=\textsc{VAR}]
{(X:\tau) \in \Gamma }
{\Gamma, \Delta \vdash X:\tau}

\inferrule* [left=\textsc{CST}]
{(k : \forall \vec{\alpha}.\tau) \in \Delta}
{\Gamma, \Delta \vdash k:\tau[\vec{\alpha} \mapsto \vec{\sigma}] }

\inferrule* [left=\textsc{CPL}]
{(f : \forall \vec{\alpha}.\tau_1 \times \dots \times \tau_n \to \tau) \in \Delta \\\\
\Gamma,\Delta \vdash t_1: \tau_1[\vec{\alpha} \mapsto \vec{\sigma}] ~~\dots ~~ \Gamma, \Delta \vdash t_n:\tau_n[\vec{\alpha} \mapsto \vec{\sigma}]}
{\Gamma, \Delta \vdash f(t_1,\dots,t_n) : \tau[\vec{\alpha} \mapsto \vec{\sigma}]}

\inferrule* [left=\textsc{EQU}]
{\Gamma, \Delta \vdash t_1 : \tau ~~~~~ \Gamma, \Delta \vdash t_2 : \tau}
{\Gamma, \Delta \vdash t_1 = t_2 : bool}
\end{mathpar}
\caption{Type System}
\label{TSystem}
\end{figure*}


We must guarantee that $\Delta$ is in agreement with $I$. For this, we have the following relation: $I \models \Delta$, is defined as $\forall (k : \tau) \in \Delta. dom(I(k)) = \{\tau\} \wedge \forall (f : \tau_1 \times \dots \times \tau_n \to \tau). I(f) : \tau_1 \times \dots \times \tau_n \to \tau$. 

\begin{example}
Let $\Gamma = \{X : int,~Y : list(int)\}$, $\Delta = \{1 : int,~nil : \forall \gamma. list(\gamma),~cons : \forall \beta. \beta \times list(\beta) \to list(\beta)\}$, and $\Lambda = (cons : \forall \beta. \beta \times list(\beta) \to list(\beta) ) \in \Delta$ (we use this $\Lambda$ just to improve presentation). Then the following type derivation holds using the type rules:

\begin{mathpar}
\inferrule*[]
{   \inferrule* []
    {
        \inferrule*[]
        {(X : int)\in \Gamma}
        {\Gamma, \Delta \vdash X : int}
        ~~~~
        \inferrule*[]
        {([~] : \forall \gamma. list(\gamma)) \in \Delta}
        {\Gamma, \Delta \vdash [~] : list(int)^{(2)}}
        ~~
        \Lambda
    }
    {\Gamma, \Delta \vdash cons(X,[~]): list(int)^{(1)}}
    ~~~~
    \inferrule* []
    {
        \inferrule*[]
        {(1 : int) \in \Delta}
        {\Gamma, \Delta \vdash 1 : int}
        ~~~~
        \inferrule*[]
        {(Y : list(int)) \in \Gamma}
        {\Gamma, \Delta \vdash Y : list(int)}
        ~~
        \Lambda
    }
    {\Gamma, \Delta \vdash cons(1,Y): list(int)^{(2)} }
}
{ \Gamma, \Delta \vdash cons(X,[~]) = cons(1,Y): bool }
\end{mathpar}

Note that in $^{(1)}$ we used $list(\beta)[\beta \mapsto int]$ and in $^{(2)}$ we used $list(\gamma)[\gamma \mapsto int]$. Also note that if $X : \alpha$ instead of $X:int$ was in $\Gamma$, we could not have a derivation.
\end{example}

We now prove that the rules for syntactic typing are sound, that is, if the set $\Delta$ is in agreement with $I$, then any type derivation is semantically correct.

\begin{theorem}[ - Soundness of Syntactic Typing]\label{SoundTS}
If $\Gamma, \Delta \vdash t:\tau$ and $I\models \Delta$, then $\Gamma \models_I t:\tau$.
\end{theorem}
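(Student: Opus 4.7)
The plan is to proceed by structural induction on the derivation of $\Gamma, \Delta \vdash t:\tau$, with one case per typing rule (\textsc{VAR}, \textsc{CST}, \textsc{CPL}, \textsc{EQU}). In each case I fix arbitrary $\Sigma$ and $\psi$ satisfying the antecedent $|[X|]_{I,\Sigma} \in \mathbf{T}|[\tau'|]_\psi$ for every $(X:\tau')\in\Gamma$, and show $|[t|]_{I,\Sigma} \in \mathbf{T}|[\tau|]_\psi$, which is exactly what $\Gamma \models_I t:\tau$ requires.

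The \textsc{VAR} case is immediate from the hypothesis on $\Gamma$. The \textsc{CST} case unpacks $I\models\Delta$: from $(k:\forall\vec{\alpha}.\tau)\in\Delta$, the agreement condition places $I(k)$ in the semantics of its type scheme, which is the intersection over every instantiation $\vec{\sigma}$; in particular, at the $\vec{\sigma}$ chosen by the rule one obtains $|[k|]_{I,\Sigma} = I(k) \in \mathbf{T}|[\tau[\vec{\alpha}\mapsto\vec{\sigma}]|]_\psi$.

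The \textsc{CPL} case is the main inductive step. Given $(f:\forall\vec{\alpha}.\tau_1\times\dots\times\tau_n\to\tau)\in\Delta$, the agreement $I\models\Delta$ places $I(f)$ in every instance of the function-type scheme, in particular at the $\vec{\sigma}$ used by the rule. The induction hypothesis applied to each premise yields $|[t_i|]_{I,\Sigma} \in \mathbf{T}|[\tau_i[\vec{\alpha}\mapsto\vec{\sigma}]|]_\psi$, and the semantics of function types then gives $I(f)(|[t_1|]_{I,\Sigma},\dots,|[t_n|]_{I,\Sigma}) \in \mathbf{T}|[\tau[\vec{\alpha}\mapsto\vec{\sigma}]|]_\psi$, which is exactly $|[f(t_1,\dots,t_n)|]_{I,\Sigma}$. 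The one delicate subcase is $f = cons$, whose interpretation may return $wrong$: because $\Delta$ assigns $cons$ the type scheme $\forall\alpha.\alpha\times list(\alpha)\to list(\alpha)$, the induction hypothesis forces the two arguments into matching domains $D$ and $List(D)$, so the $wrong$ branch is never triggered.

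The \textsc{EQU} case is where I expect the main obstacle. By IH both $|[t_1|]_{I,\Sigma}$ and $|[t_2|]_{I,\Sigma}$ lie in the same $\mathbf{T}|[\tau|]_\psi$, and one must conclude $|[t_1=t_2|]_{I,\Sigma} \in \mathbf{Bool} = \mathbf{T}|[bool|]_\psi$. The hard part is ruling out the $wrong$ branch of $eq$: it must be shown that the semantics of any $type$ sits inside a single non-$\mathbf{Wrong}$ domain of the universe partition, so that $dom(v_1)\cap dom(v_2)\neq\emptyset$ and $dom(v_i)\neq\{\mathbf{Wrong}\}$. This is precisely where determinism of the type definitions pays off: each ground $type$ (a base type, or a $tsymbol$ application reducing through its unique definition) maps to exactly one proper domain in the partition, with $[~]$ being the only value lying in more than one, which is harmless because it keeps $dom$-intersections non-empty. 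Granting this, $eq$ returns $true$ or $false$, completing the induction.
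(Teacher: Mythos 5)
Your proof follows essentially the same route as the paper's: induction on the typing derivation, with the \textsc{VAR}, \textsc{CST}, \textsc{CPL}, and \textsc{EQU} cases each unfolding $I \models \Delta$ and the intersection semantics of type schemes exactly as the paper does. The only difference is that you spell out why the $wrong$ branches of $cons$ and $eq$ cannot fire (each $type$ denotes a single non-\textbf{Wrong} domain, with $[~]$ as the harmless exception), a point the paper's proof leaves implicit; this is a clarification rather than a divergence.
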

\begin{proof}
We will prove this by induction on the derivation.
\begin{itemize}
    \item If the term $t$ is a variable $X$, then the derivation consists of a single application of axiom $VAR$. Clearly, it is also true that $\Gamma \vdash_I X:\tau$, where $(X:\tau) \in \Gamma$, since any $\Sigma$ that gives values to $X$ and $\psi$ that gives values to $\tau$, such that $|[X|]_{I,\Sigma} \in \mathbf{T}|[\tau|]_{\psi}$ will do so in the context and in the term itself simultaneously, so $\Gamma \vdash_I X : \tau$.

    \item If the term $t$ is a constant $k$, then the derivation consists of a single application of axiom $CST$. Since $I \models \Delta$, $dom(I(k)) = \{\forall \vec{\alpha}.\tau\}$, where $(k : \forall \vec{\alpha}.\tau) \in \Delta$, then $k \in \mathbf{T}|[\forall \vec{\alpha}.\tau|]_{\psi}$, for any $\psi$. But since $\mathbf{T}|[\forall \vec{\alpha}.\tau|]_{\psi} = \bigcap_{\forall\vec{\sigma}} \mathbf{T}|[\tau [\vec{\alpha} \mapsto \vec{\sigma}]|]_{\psi}$, then $k \in \mathbf{T}|[\tau [\vec{\alpha} \mapsto \vec{\sigma}]|]_{\psi}$. So for any $\Sigma$, the right-hand side of the implication is always true, so $\Gamma \vdash_I k : \tau [\vec{\alpha} \mapsto \vec{\sigma}]$.

    \item If the term $t$ is a complex term $f(t_1,\dots,t_n)$, then we can assume, by induction hypothesis, that $\Gamma \vdash_I t_i: \tau_i[\vec{\alpha} \mapsto \vec{\sigma}]$, for all $i = 1,\dots,n$. Since $I \models \Delta$, $I(f): \forall \vec{\alpha}. \tau_1 \times \dots \times \tau_n \to \tau$, then $f \in \mathbf{T}|[\forall \vec{\alpha}. \tau_1 \times \dots \times \tau_n \to \tau|]_{\psi}$, for all $\psi$, so $f \in \mathbf{T}|[(\tau_1 \times \dots \times \tau_n \to \tau)[\vec{\alpha} \mapsto \vec{\sigma}]|]_{\psi}$. Therefore, we know that, if $v_i \in \mathbf{T}|[\tau_i[\vec{\alpha} \mapsto \vec{\sigma}]|]_{\psi}$ then $f(v_1,\dots,v_n) \in \mathbf{T}|[\tau[\vec{\alpha} \mapsto \vec{\sigma}]|]_{\psi}$. For any $\Sigma$ and $\psi$ such that $\forall (X:\tau\prime) \in \Gamma. |[X|]_{I,\Sigma} \in \mathbf{T}|[\tau\prime|]_{\psi}$, 
    by the induction hypothesis $|[t_i|]_{I,\Sigma} \in \mathbf{T}|[\tau_i[\vec{\alpha} \mapsto \vec{\sigma}]|]_{\psi}$. Therefore, for the same $\Sigma$ and $\psi$, we know that $|[f(t_1,\dots,t_n)|]_{I,\Sigma} \in \mathbf{T}|[\tau[\vec{\alpha} \mapsto \vec{\sigma}]|]_{\psi}$, so $\Gamma \vdash_I f(t_1,\dots,t_n) : \tau[\vec{\alpha} \mapsto \vec{\sigma}]$.
    
    \item If we have an equality of two terms $t_1 = t_2$, we can assume, by induction hypothesis, that $\Gamma \vdash_I t_1: \tau$ and $\Gamma \vdash_I t_2: \tau$. Therefore we know that for any $\Sigma$ and $\psi$ such that $\forall (X:\tau\prime) \in \Gamma. |[X|]_{I,\Sigma} \in \mathbf{T}|[\tau\prime|]_{\psi}$, we have $|[t_1|]_{I,\Sigma} \in \mathbf{T}|[\tau|]_{\psi}$ and $|[t_2|]_{I,\Sigma} \in \mathbf{T}|[\tau|]_{\psi}$. So for these $\Sigma$ and $\psi$, we have $|[t_1 = t_2|]_{\Sigma} \in |[bool|]_{\psi}$. Therefore, $\Gamma \vdash_I t_1 = t_2 : bool$.
\end{itemize}
\end{proof}

Given a term $t$ is there a typing representing all possible typings of $t$? In order to answer this question we introduce  the notion of {\em principal typing}, \cite{TrevorJim96},  as appropriate to our system.

\begin{definition}
A \emph{principal typing} is a pair $(\Gamma,\tau)$, such that $\Gamma, \Delta \vdash t:\tau$ and for every other pair $(\Gamma\prime,\tau\prime)$ such that $\Gamma\prime, \Delta \vdash t : \tau\prime$, there is a type substitution $\mu$ such that $\mu(\Gamma) = \Gamma\prime$ and $\mu(\tau) = \tau\prime$.
\end{definition}

Note that even though, initially, it might seem possible that the context in a principal typing will always be a generic context, for some cases that is not the case.

\begin{example}
Let $t =$\texttt{cons(X,Y)}. A principal typing for $t$ is $(\{X : \alpha, Y : list(\alpha)\}, list(\alpha))$. Note that any renaming of type variable $\alpha$ defines another principal typing, because principal typings are unique up to renaming of type variables. Also note that the type for $Y$ cannot be a type variable, thus, in this example, the context is not generic.
\end{example}

\section{Constraints}

To check implicit types during unification, we must deduce types that are not present in equality equations. To represent this problem in a broader context, we introduce the notion of type constraint which we add to the usual term unification problem.

We define equality constraints between terms $t_1 = t_2$, and equality constraints between types $\tau_1 \doteq \tau_2$. We are here using the same symbol for equality constraints and the equality predicate. We argue that the uses are clear from the context.

We say that a set of equality constraints is in {\em normal form} if all constraints are of the form $X_i = t_i$, for some term $t_i$, and there is no other occurrence of any $X_i$ anywhere else in the set. A set of equality constraints in normal form can be interpreted as a substitution, where every constraint of the form $X_i = t_i$ is interpreted as $[X_i \mapsto t_i]$.

A set of type equality constraints is in normal form if all constraints are of the form $\alpha_i \doteq \tau_i$, for some type $\tau_i$, and there is no other occurrence of any $\alpha_i$ anywhere else in the set. A set of type equality constraints in normal form can be interpreted as a type substitution, where every constraint of the form $\alpha_i \doteq \tau_i$ is interpreted as $[\alpha_i \mapsto \tau_i]$.

\begin{definition}
A substitution $\theta$ (or type substitution $\mu$) is called a {\em unifier} for terms $t_1$ and $t_2$ (or types $\tau_1$ and $\tau_2$), iff $\theta(t_1) = \theta(t_2)$ (or $\mu(\tau_1) = \mu(\tau_2)$). Terms $t_1$ and $t_2$ (or types $\tau_1$ and $\tau_2$) are {\em unifiable} iff there exists a unifier for them.
\end{definition}

Our constraints are supposed to represent equality, either of terms or types. However, in the semantics, we need states and valuations to interpret non-ground terms and types, respectively. Therefore, we need a way to interpret constraints semantically, so we define the following.

\begin{definition}
Let $c$ be a constraint, $\Sigma$ a state, and $\psi$ a valuation. We say that $\Sigma$ and $\psi$ model $c$, and represent it by $\Sigma,\psi \models c$ if:
\begin{itemize}
    \item $c$ is an equality constraint of the form $t_1 = t_2$, then $|[t_1|]_{I,\Sigma} = |[t_2|]_{I,\Sigma}$;
    \item $c$ is a type equality constraint of the form $\tau_1 \doteq \tau_2$, then $\mathbf{T}|[\tau_1|]_{\psi} = \mathbf{T}|[\tau_2|]_{\psi}$;
\end{itemize}
\end{definition}

We can easily extend this definition for sets of constraints.

\begin{definition}
Let $C$ be a set of equality constraints and $S$ be a set of type equality constraints. We say that a state $\Sigma$ and a valuation $\psi$ model the pair $(C,T)$, and represent it by $\Sigma,\psi \models C,T$ iff $\Sigma$ and $\psi$ model all constraints in both sets.
\end{definition}

We now provide an auxiliary definition that relates substitutions and states and use this definition to extend our notion of constraint modelling.

\begin{definition}
We say that a state $\Sigma$ follows a substitution $\theta$ and represent it by $\Sigma \sim \theta$ iff for any term $t$, $|[t|]_{I,\Sigma} = v$ and $|[\theta(t)|]_{I,\Sigma} = v$. Similarly, a valuation $\psi$ follows a substitution for types $\mu$ ($\psi \sim \mu$) iff for any type $\tau$, $\mathbf{T}|[\tau|]_{\psi} = \mathbf{T}|[\mu(\tau)|]_{\psi}$.
\end{definition}

\begin{definition}
Let $C$ be a set of equality constraints and $S$ be a set of type equality constraints. We say that a substitution $\theta$ and a type substitution $\mu$ model the pair $(C,T)$, and represent it by $\theta,\mu \models C,T$, iff for every state $\Sigma$ and valuation $\psi$ we have that $\Sigma \sim \theta \wedge \psi \sim \mu \implies \Sigma,\psi \models C,T $.
\end{definition}

\section{Typed Unification Algorithm}

The typed unification algorithm performs unification for terms and types. The intuition is that if the types do not unify, then there is a type error. We will prove this condition in the next section.
We follow the approach of \cite{Wand87}: generate constraints for typeability and solve them. 

\subsection{Constraint Generation}

\small
\begin{figure*}[!ht]
\begin{mathpar}
\inferrule* [left=\textsc{GVAR}]
{ (X: \alpha) \in \Gamma }
{\Gamma,\Delta \vdash X:\alpha ~|~ \emptyset~|~\emptyset}

\inferrule* [left=\textsc{GCST}]
{(k : \forall \vec{\alpha}.\tau) \in \Delta}
{\Gamma,\Delta \vdash k:\tau[\vec{\alpha} \mapsto \vec{\beta}]  ~|~ \emptyset ~|~ \emptyset}

\inferrule* [left=\textsc{GCPL}]
{(f : \forall \vec{\alpha}.\tau_1 \times \dots \times \tau_n \to \tau) \in \Delta \\ \\
\Gamma,\Delta \vdash t_1: \tau_1\prime ~|~ \emptyset ~|~ T_1~~\dots ~~ \Gamma, \Delta \vdash t_n:\tau_n\prime~|~ \emptyset ~|~ T_n}
{\Gamma, \Delta \vdash f(t_1,\dots,t_n) : \tau[\vec{\alpha} \mapsto \vec{\beta}]~|~\emptyset ~|~ T_1 \cup \dots \cup T_n  \cup \{\tau\prime_1 \doteq \tau_1[\vec{\alpha} \mapsto \vec{\beta}], \dots, \tau\prime_n \doteq \tau_n[\vec{\alpha} \mapsto \vec{\beta}] \}}

\inferrule* [left=\textsc{GEQU}]
{\Gamma,\Delta \vdash t_1 : \tau_1 ~|~C_1~|~T_1 ~~~~~ \Gamma,\Delta \vdash t_2 : \tau_2 ~|~C_2~|~T_2}
{\Gamma,\Delta \vdash t_1 = t_2 : bool ~|~\{t_1 = t_2\}~|~ T_1 \cup T_2 \cup \{\tau_1 \doteq \tau_2\}}
\end{mathpar}
\caption{Constraint Typing Judgment}
\label{ConsGen}
\end{figure*}
\normalsize

Guided by the definition of our type system we now define a {\em constraint typing judgment}, which indicates what constraints must hold for a particular type term-and-context pair to be typeable.

Let $\Gamma$ be a generic context, and $\Delta$ a set of type assumptions
for constants and function symbols. We use the following rules to generate constraints for the unification of two terms $t_1$ and $t_2$. The generated constraints will be the pair $(C,T)$ in $\Gamma, \Delta \vdash t_1 = t_2 : bool ~|~C~|~T$. In the rules in Figure \ref{ConsGen}, $\vec{\beta}$ represents a sequence of fresh type variables of the same size as $\vec{\alpha}$ in the corresponding case.

\begin{example}
Let $\Gamma$ be a generic context (we will denote the type variable associated with each variable $X$ by $\alpha_X$), $\Delta = \{1 : int,~[~] : \forall \alpha. list(\alpha),~cons : \forall \beta. \beta \times list(\beta) \to list(\beta)\}$, $C = \{cons(X,[~]) = cons(1,Y)\}$, and $\Lambda = (cons : \forall \beta. \beta \times list(\beta) \to list(\beta)) \in \Delta$. The following constraint type judgements hold:

\begin{mathpar}
\inferrule*[]
{
    \inferrule*[]
    {(X : \alpha_X)\in \Gamma}
    {\Gamma, \Delta \vdash X : \alpha_X~|~\emptyset~|~\emptyset}
    ~~~~
    \inferrule*[]
    {([~] : \forall \alpha. list(\alpha)) \in \Delta}
    {\Gamma, \Delta \vdash [~] : list(\gamma)~|~\emptyset~|~\emptyset}
    ~~
    \Lambda
}
{\Gamma, \Delta \vdash cons(X,[~]): list(\nu)~|~\emptyset~|~\{\alpha_X = \nu,~list(\gamma) = list(\nu)\} (= T_1)}
\\
\inferrule* []
{
    \inferrule*[]
    {(1 : int) \in \Delta}
    {\Gamma, \Delta \vdash 1 : int~|~\emptyset~|~\emptyset}
    ~~~~
    \inferrule*[]
    {(Y : \alpha_Y) \in \Gamma}
    {\Gamma, \Delta \vdash Y : \alpha_Y~|~\emptyset~|~\emptyset}
    ~~
    \Lambda
}
{\Gamma, \Delta \vdash cons(1,Y): list(\eta)~|~\emptyset~|~\{int = \eta,~\alpha_Y = list(\eta)\} (= T_2) }
\\
\inferrule*[]
{
    \Gamma, \Delta \vdash cons(X,[~]): list(\eta)~|~\emptyset~|~T_1
    ~~~~
    \Gamma, \Delta \vdash cons(1,Y): list(\nu)~|~\emptyset~|~T_2
}
{ \Gamma, \Delta \vdash cons(X,[~]) = cons(1,Y): bool ~|~ C~|~ T_1 \cup T_2 \cup \{list(\nu) = list(\eta)\}    
}
\end{mathpar}
\end{example}

We will now prove that constraint generation is sound, i.e., if we generate constraints any model for them applied to $\Gamma$ and type $\tau$ is derivable in the type system.

\begin{theorem}[ - Soundness of the Constraint Generation]\label{SoundCG}
If $\Gamma,\Delta \vdash t:\tau ~|~C~|~T$ and $\mu \models T$, then $\mu(\Gamma), \Delta \vdash t : \mu(\tau)$ is derivable in the type system.
\end{theorem}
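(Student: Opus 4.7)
The plan is to proceed by structural induction on the derivation of $\Gamma, \Delta \vdash t : \tau ~|~ C ~|~ T$, with four cases corresponding to the rules \textsc{GVAR}, \textsc{GCST}, \textsc{GCPL}, and \textsc{GEQU} of Figure \ref{ConsGen}. In each case I will exhibit a derivation in the type system of Figure \ref{TSystem} concluding $\mu(\Gamma), \Delta \vdash t : \mu(\tau)$, choosing the existential instantiation $\vec{\sigma}$ in the rules \textsc{CST} and \textsc{CPL} to be $\mu(\vec{\beta})$, where $\vec{\beta}$ are the fresh type variables introduced by constraint generation.

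The two base cases are immediate. For \textsc{GVAR}, from $(X:\alpha) \in \Gamma$ I get $(X : \mu(\alpha)) \in \mu(\Gamma)$, and \textsc{VAR} yields $\mu(\Gamma), \Delta \vdash X : \mu(\alpha)$ as required. For \textsc{GCST}, the freshness of $\vec{\beta}$ with respect to $\tau$'s bound variables $\vec{\alpha}$ gives $\mu(\tau[\vec{\alpha}\mapsto\vec{\beta}]) = \tau[\vec{\alpha}\mapsto\mu(\vec{\beta})]$, so \textsc{CST} with instantiation $\vec{\sigma} := \mu(\vec{\beta})$ derives $\mu(\Gamma), \Delta \vdash k : \mu(\tau[\vec{\alpha}\mapsto\vec{\beta}])$.

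For \textsc{GCPL}, the generated constraint set is $T = T_1 \cup \dots \cup T_n \cup \{\tau'_i \doteq \tau_i[\vec{\alpha}\mapsto\vec{\beta}]\}_i$. Since $\mu \models T$, in particular $\mu \models T_i$ for each $i$, so the induction hypothesis gives derivations $\mu(\Gamma), \Delta \vdash t_i : \mu(\tau'_i)$. The crucial step is to use $\mu \models \tau'_i \doteq \tau_i[\vec{\alpha}\mapsto\vec{\beta}]$ to conclude $\mu(\tau'_i) = \mu(\tau_i[\vec{\alpha}\mapsto\vec{\beta}]) = \tau_i[\vec{\alpha}\mapsto\mu(\vec{\beta})]$, so that the premises of \textsc{CPL} line up with the declared argument types under the common instantiation $\vec{\sigma} := \mu(\vec{\beta})$. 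Its conclusion then reads $\mu(\Gamma), \Delta \vdash f(t_1,\dots,t_n) : \tau[\vec{\alpha}\mapsto\mu(\vec{\beta})]$, which is $\mu(\tau[\vec{\alpha}\mapsto\vec{\beta}])$ as desired. The \textsc{GEQU} case is analogous but simpler: the induction hypothesis produces $\mu(\Gamma), \Delta \vdash t_j : \mu(\tau_j)$ for $j=1,2$, and $\mu \models \tau_1 \doteq \tau_2$ forces $\mu(\tau_1) = \mu(\tau_2)$, which allows \textsc{EQU} to conclude $\mu(\Gamma), \Delta \vdash t_1 = t_2 : bool$.

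The main obstacle is bridging the semantic notion of modelling with the syntactic equality needed to make the type-system derivations go through. The statement $\mu \models \tau_1 \doteq \tau_2$ is defined semantically: for every valuation $\psi$ with $\psi \sim \mu$, the domains $\mathbf{T}|[\tau_1|]_{\psi}$ and $\mathbf{T}|[\tau_2|]_{\psi}$ coincide. I will need to argue that this forces $\mu(\tau_1) = \mu(\tau_2)$ as type expressions (or at least as semantically interchangeable types within the \textsc{CPL}/\textsc{EQU} rules), by exploiting the fact that $\psi \sim \mu$ means $\psi$ ranges over arbitrary extensions of $\mu$ on the remaining type variables; if $\mu(\tau_1)$ and $\mu(\tau_2)$ disagreed syntactically in a discriminating position, one could construct a $\psi$ assigning distinct base domains and break the semantic equation. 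Once this bridge is established, the bookkeeping of the induction—particularly the freshness of $\vec{\beta}$ allowing $\mu$ to commute with the instantiation $[\vec{\alpha}\mapsto\vec{\beta}]$—is routine.
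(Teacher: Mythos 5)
Your proposal is correct and follows essentially the same route as the paper's own proof: structural induction on the constraint-typing derivation, instantiating $\vec{\sigma}$ as $\mu(\vec{\beta})$ in \textsc{CST}/\textsc{CPL}, and using $\mu \models \tau'_i \doteq \tau_i[\vec{\alpha}\mapsto\vec{\beta}]$ to align the premises of \textsc{CPL} and \textsc{EQU}. The one place where you go beyond the paper is the final paragraph: the paper silently passes from the semantic statement $\mu \models \tau_1 \doteq \tau_2$ to the syntactic identity $\mu(\tau_1) = \mu(\tau_2)$, whereas you correctly flag that this bridge needs an argument (distinct type expressions can be separated by a suitable valuation extending $\mu$) --- a worthwhile clarification, not a deviation in approach.
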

\begin{proof}
We will prove this theorem by induction on the derivation.
\begin{itemize}
    \item If $t$ is a variable $X$, then we have $\Gamma, \Delta \vdash X : \alpha ~|~\emptyset~|~\emptyset$. Any type substitution $\mu$ is such that $\mu \models \emptyset$. And, for any $\mu$, since $\mu(\alpha)$ will be the same in $\Gamma$ and in the consequent of the rule, $\mu(\Gamma), \Delta \vdash X : \mu(\alpha)$ is derivable in the type system by a single application of rule VAR.

    \item If $t$ is a constant $k$, then we have $\Gamma, \Delta \vdash k : \tau[\vec{\alpha} \mapsto \vec{\beta}] ~|~\emptyset~|~\emptyset$, where $(k : \forall \vec{\alpha}. \tau) \in \Delta$. Any type substitution $\mu$ is such that $\mu \models \emptyset$. Then, for any such $\mu$ we can have the derivation in the syntactic system using a single application of rule CST, using $\mu(\tau[\vec{\alpha} \mapsto \vec{\beta}]) = \tau[\vec{\alpha} \mapsto \vec{\mu(\beta)}]$.

    \item If $t$ is a complex term $f(t_1,\dots,t_n)$, then we have $\Gamma, \Delta \vdash f(t_1,\dots,t_n) : \tau[\vec{\alpha} \mapsto \vec{\beta}] ~|~\emptyset~|~T_1 \cup \dots \cup T_n \cup \{\tau\prime_1 \doteq \tau_1[\vec{\alpha} \mapsto \vec{\beta}], \dots, \tau\prime_n \doteq \tau_n[\vec{\alpha} \mapsto \vec{\beta}] \} $, given $\Gamma,\Delta \vdash t_i: \tau_i\prime ~|~ \emptyset ~|~ T_i$, for $i = 1,\dots,n$. We also know that $\mu \models T_1 \cup \dots \cup T_n \cup \{\tau\prime_1 \doteq \tau_1[\vec{\alpha} \mapsto \vec{\beta}], \dots, \tau\prime_n \doteq \tau_n[\vec{\alpha} \mapsto \vec{\beta}] \}$, and any such $\mu$ is such that $\mu \models T_i$ and $\mu \models \tau\prime_i \doteq \tau_i[\vec{\alpha} \mapsto \vec{\beta}]$, for each $i = 1,\dots,n$. By the induction hypothesis, we have $\mu(\Gamma), \Delta \vdash t_i : \mu(\tau\prime_i)$. But we know that $\mu(\tau\prime_i) = \mu(\tau_i[\vec{\alpha} \mapsto \vec{\beta}])$, since $\mu \models \tau\prime_i \doteq \tau_i[\vec{\alpha} \mapsto \vec{\beta}]$, for all $i = 1,\dots,n$. So we also have $\mu(\Gamma), \Delta \Vdash t_i : \mu(\tau_i[\vec{\alpha} \mapsto \vec{\beta}])$. Therefore, by a single application of the CPL rule, we get $\mu(\Gamma), \Delta \Vdash f(t_1,\dots,t_n) : \mu(\tau[\vec{\alpha} \mapsto \vec{\beta}])$.

    \item If $t$ is an equality $t_1 = t_2$, then we have $\Gamma, \Delta \vdash t_1 = t_2 : bool ~|~\{t_1 = t_2\}~|~T_1 \cup T_2 \cup \{\tau_1 \doteq \tau_2\} $, given $\Gamma,\Delta \vdash t_1: \tau_1\prime ~|~ \emptyset ~|~ T_1$ and $\Gamma,\Delta \vdash t_2: \tau_2\prime ~|~ \emptyset ~|~ T_2$. We also know that $\mu \models T_1 \cup T_2 \cup \{\tau_1 \doteq \tau_2\}$, and any such $\mu$ is such that $\mu \models T_1$, $\mu \models T_2$, and $\mu \models \tau_1 \doteq \tau_2$. By the induction hypothesis, we have $\mu(\Gamma), \Delta \vdash t_1 : \mu(\tau_1)$ and $\mu(\Gamma), \Delta \vdash t_2 : \mu(\tau_2)$, but since $\mu \models \tau_1 \doteq \tau_2$, we know that $\mu(\tau_1) \equiv \mu(\tau_2)$. So by a single application of rule EQU, we get $\mu(\Gamma), \Delta \vdash t_1 = t_2 : \mu(bool)$, and $\mu(bool) = bool$.
\end{itemize}
\end{proof}

\subsection{Constraint Solving}

In this section we present a procedure that generalizes Robinson unification \cite{Robinson1965} to account for type constraints and produces solutions, where possible. Since each rule simplifies the constraints, together they induce a straightforward decision procedure for type and term constraints.

Suppose we want to unify two terms $t_1$ and $t_2$. Let us have $\Gamma, \Delta \vdash t_1 = t_2 : bool ~|~C~|~ T$ derived in the constraint generation step. Then we apply the following rewriting rules to the tuple $(C,T)$, until none applies. We apply the rules in order, meaning we only apply rule $n$ if no rule $i$ with $i < n$ applies.

\begin{enumerate}
    \item $(C, \{f(\tau_1,\dots,\tau_n) \doteq f(\tau\prime_1,\dots,\tau\prime_n)\} \cup Rest) \to (C, \{\tau_1 \doteq \tau\prime_1,\dots,\tau_n \doteq \tau\prime_n\} \cup Rest)$
    \item $(C, \{\tau \doteq \tau\} \cup Rest) \to (C, Rest)$
    \item $(C, \{f(\tau_1,\dots,\tau_n) \doteq g(\tau\prime_1,\dots, \tau\prime_m)\}\cup Rest) \to wrong$, if $f \neq g$ or $n \neq m$
    \item $(C, \{\tau\doteq\alpha\}\cup Rest) \to (C, \{\alpha\doteq\tau\}\cup Rest)$, if $\tau$ is not a type variable
    \item $(C, \{\alpha\doteq\tau\}\cup Rest) \to (C, \{\alpha\doteq\tau\}\cup Rest[\alpha \mapsto \tau])$, if $\alpha$ does not occur in $\tau$
    \item $(C, \{\alpha\doteq\tau\}\cup Rest) \to wrong$, if $\alpha$ occurs in $\tau$

    \item $(\{f(t_1,\dots,t_n) = f(s_1,\dots,s_n)\} \cup Rest,T) \to (\{t_1 = s_1,\dots,t_n = s_n\} \cup Rest,T)$
    \item $( \{t = t\} \cup Rest,T) \to (Rest,T)$
    \item $(\{f(t_1,\dots,t_n) = g(s_1,\dots,s_m)\}\cup Rest,T) \to false$, if $f\neq g$ or $n \neq m$
    \item $(\{t = X\}\cup Rest,T) \to (\{X = t \}\cup Rest,T)$, if $t$ is not a variable
    \item $(\{X = t\}\cup Rest,T) \to (\{X = t\}\cup Rest[X \mapsto t],T)$, if $X$ does not occur in $t$
    \item $(\{X = t\}\cup Rest,T) \to false$, if $X$ occurs in $t$.
\end{enumerate}

\noindent
We will use the symbol $\to^{*}$ to denote the reflexive and transitive closure of $\to$.

\begin{example}
Let  $C = \{cons(X,[~]) = cons(1,Y)\}$ and $T = \{\alpha_X = \nu,~list(\gamma) = list(\nu), int = \eta,~\alpha_Y = list(\eta), list(\nu) = list(\eta)\}$. Step-by-step the algorithm rewrite the pair $(C,T)$ as follows:\\
$(C,T) \to (C,\{\alpha_X = \nu,~\gamma = \nu, int = \eta,~\alpha_Y = list(\eta), list(\nu) = list(\eta)\}) \to$\\
$(C,\{\alpha_X = \nu,~\gamma = \nu, int = \eta,~\alpha_Y = list(\eta), \nu = \eta\}) \to$\\
$(C,\{\alpha_X = \nu,~\gamma = \nu, \eta = int,~\alpha_Y = list(\eta), \nu = \eta\}) \to$\\
$(C,\{\alpha_X = \nu,~\gamma = \nu, \eta = int,~\alpha_Y = list(int), \nu = int\}) \to$\\
$(C,\{\alpha_X = \nu,~\gamma = int, \eta = int,~\alpha_Y = list(int), \nu = int\} ( = T\prime)) \to$\\
$(\{X = 1,[~] = Y\},T\prime) \to (\{X = 1,Y = [~]\},T\prime)$\\

Note that, in the final pair, no more rules apply and we can interpret this pair as a pair of substitutions for terms and for types, respectively.
\end{example}

\subsection{Properties of the Regular Typed Unification Algorithm}

In this section we show the main properties of regular typed unification.
Firstly, it always terminates. Secondly, it is correct, meaning that the result is the same as we would have gotten in the equality theory defined for $=$ semantically. One big obstacle for this second property is that terms may not be ground when we want to unify them, and semantically we always need a state to evaluate variables. We will be conservative and assume that if there is a possible state for which the terms have values in the same semantic domain, then there is no type error (yet). Similarly, if there is a state for which the terms have the same semantic value, then the result is not \textit{false} (yet).


\begin{theorem}[ - Termination]
Let $(C,T)$ be the sets of constraints generated for terms $t_1$ and $t_2$. The algorithm always terminates, returning a pair of unifiers, false, or wrong.
\end{theorem}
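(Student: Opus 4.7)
The plan is to prove termination by exhibiting a well-founded measure on configurations $(C,T)$ and showing that every rewriting step strictly decreases it. A useful preliminary observation is that rules 1--6 modify only $T$ (they never produce new term equations) and rules 7--12 modify only $C$ (they never produce new type equations). Because the rules are applied in priority order and 1--6 have higher priority than 7--12, the procedure must fully saturate $T$ before touching $C$. Hence it suffices to prove termination of the type block and then of the term block, independently.

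For the type block I would associate to every configuration $(C,T)$ the lexicographic triple $\mu_T(C,T) = (n_1,n_2,n_3)$ where $n_1$ is the number of distinct type variables occurring in $T$ that are not yet \emph{solved} (a variable $\alpha$ is solved when it appears as the left-hand side of exactly one equation $\alpha \doteq \tau$ with $\alpha \notin \tau$ and $\alpha$ not occurring anywhere else in $T$), $n_2$ is the total number of type-function and type-base symbols in $T$, and $n_3$ is the number of equations in $T$ whose left-hand side is not a type variable. A direct case analysis then suffices. Rule 1 strips two outer function symbols and introduces only smaller type terms, so $n_2$ strictly decreases while $n_1$ cannot grow. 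Rule 2 removes an equation and strictly decreases $n_2$. Rules 3 and 6 terminate with \emph{wrong}. Rule 4 leaves $n_1$ and $n_2$ unchanged while strictly decreasing $n_3$ by flipping a non-variable left-hand side. Rule 5 moves $\alpha$ to solved status, strictly decreasing $n_1$. Well-foundedness of the lexicographic order on $\mathbb{N}^3$ then gives termination for the $T$-block, and the same recipe with an analogous measure $\mu_C$ on $C$ handles the term block.

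The main obstacle will be Rule 5 (and its twin, Rule 11). Read literally, if $\alpha$ does not occur in $Rest$ then $Rest[\alpha \mapsto \tau] = Rest$ and the rule would yield the same configuration, so it could fire forever without progress. The clean fix is to read Rule 5 with the tacit precondition ``$\alpha$ occurs in $Rest$'' (otherwise the equation $\alpha \doteq \tau$ is already in solved form and no action is required), and similarly for Rule 11; with that reading, $n_1$ strictly decreases at each genuine application. A secondary small point is that decomposition in Rule 1 might appear to \emph{create} new variables, but the set of type variables occurring in $T$ can only shrink under that rewrite, so the bookkeeping of $n_1$ is safe.

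Once termination of the rewriting is in hand, the final step is to read off the trichotomy. If the rewriting halts on \emph{wrong} (via 3 or 6) or on \emph{false} (via 9 or 12) we are done. Otherwise no rule applies, and a straightforward inspection of the rule preconditions shows that in that case every equation in $T$ has the shape $\alpha_i \doteq \tau_i$ with $\alpha_i$ not occurring elsewhere, and every equation in $C$ has the shape $X_i = t_i$ with $X_i$ not occurring elsewhere, so $(C,T)$ is in the normal form introduced earlier and can be read as a pair of unifiers. This yields exactly the three possible outcomes asserted by the theorem.
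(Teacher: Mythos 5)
Your proof is essentially correct but takes a genuinely different route from the paper. The paper's proof is a two-line citation: it observes that rules 1--6 and rules 7--12 are each an instance of the Martelli--Montanari algorithm and defers termination to \cite{MartelliMontanari}, then reads off the trichotomy from the fact that each block either fails or yields a most general unifier. You instead make the argument self-contained by exhibiting the standard lexicographic measure and doing the case analysis; this buys an explicit, checkable proof and, more importantly, forces you to confront a detail the paper's citation glosses over: as literally written, rule 5 (and rule 11) fires even when $\alpha$ does not occur in $Rest$, producing the identical configuration forever. Your fix --- adding the side condition that $\alpha$ occurs in $Rest$, i.e.\ that the equation is not already in solved form --- is exactly the standard repair, and your closing observation that a configuration on which no rule applies is in normal form (hence readable as a pair of unifiers, matching Lemma \ref{SelfSat}) correctly completes the trichotomy. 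The cost of your approach is only length; the benefit is that the termination claim no longer rests on the reader matching the paper's rule set against the cited one.

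One small slip in the measure itself: you define $n_2$ as the number of type-function and type-base symbols in $T$, so rule 2 applied to an equation $\alpha \doteq \alpha$ with $\alpha$ a bare type variable leaves $n_2$ unchanged; if $\alpha$ also occurs unsolved elsewhere in $T$ (e.g.\ $T = \{\alpha \doteq \alpha,\ \beta \doteq f(\alpha)\}$, which can arise by decomposing $f(\alpha) \doteq f(\alpha)$), then $n_1$ and $n_3$ are also unchanged and your triple does not strictly decrease. The standard fix is either to let $n_2$ count \emph{all} symbol occurrences including type variables (rule 5 may then increase $n_2$, but it strictly decreases the dominant component $n_1$, so the lexicographic order still works), or to append the number of equations as a fourth component. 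With that adjustment the case analysis goes through as you describe.
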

\begin{proof}
We divide the algorithm in two parts. The first consists of the rules 1 to 6, and the second of the rules 7 to 12.
Each of these parts are the Martelli-Montanari algorithm \cite{MartelliMontanari} for its corresponding kind of constraints, type equality and equality, respectively. Therefore they terminate.

For a formal proof for the termination of the Martelli-Montanari algorithm, we defer the reader to \cite{MartelliMontanari}.

Moreover, if the Martelli-Montanari terminates, the output is either a most general unifier, or the algorithm fails. In the first part, failure is represented by \textit{wrong}, and in the second part, it is represented by \textit{false}. So our algorithm either terminates and outputs \textit{wrong}, \textit{false}, or both parts succeed and the algorithm outputs a pair of most general unifiers.
\end{proof}\\

We now know that the algorithm terminates, and what the outputs might be. We will additionally prove that the result is semantically valid. We start by proving a few auxiliary lemmas.

The following lemmas are used to prove soundness.

\begin{lemma}[ - Rewriting Consistency]\label{consistent}
Let $(C,T) \to (C\prime,T\prime)$ be a step in the typed unification algorithm, such that the output is not $false$ nor $wrong$. Then, if for all equality constraints $(t_1 = t_2) \in C\prime$ the substitution $\theta$ is a unifier of $t_1$ and $t_2$, then $\theta$ is also a unifier of each equality constraint in $C$. Same applies to $T\prime$ and $T$, with a type substitution $\mu$.
\end{lemma}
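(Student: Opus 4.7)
The proof proceeds by a case analysis on which of the twelve rewriting rules was applied in the step $(C,T) \to (C',T')$. Since we are told the output is neither $false$ nor $wrong$, we may immediately discard rules 3, 6, 9, and 12, whose right-hand sides are exactly those failure values. The remaining eight rules fall into two parallel groups (1, 2, 4, 5 for type constraints and 7, 8, 10, 11 for term constraints) and the arguments for the two groups are completely symmetric, so the plan is to do the type-side carefully and invoke symmetry for the term-side.

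For each rule I would fix a unifier of the right-hand side and verify it unifies each constraint of the left-hand side. The easy cases are: rule 2 (deleting $\tau \doteq \tau$), since any $\mu$ trivially satisfies $\mu(\tau) = \mu(\tau)$; rule 4 (swapping $\tau \doteq \alpha$ to $\alpha \doteq \tau$), since $\mu(\alpha) = \mu(\tau)$ iff $\mu(\tau) = \mu(\alpha)$; and rule 1 (decomposition), where from $\mu(\tau_i) = \mu(\tau'_i)$ for all $i$ one rebuilds $\mu(f(\tau_1,\dots,\tau_n)) = f(\mu(\tau_1),\dots,\mu(\tau_n)) = f(\mu(\tau'_1),\dots,\mu(\tau'_n)) = \mu(f(\tau'_1,\dots,\tau'_n))$ using the fact that substitution commutes with term/type construction. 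In all three cases, the constraints of $T'$ not involved in the rule are identical to those of $T$, so unification is preserved there trivially.

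The main obstacle, and the only case that requires real thought, is rule 5 (and its term analogue, rule 11): $(C, \{\alpha \doteq \tau\} \cup \mathit{Rest}) \to (C, \{\alpha \doteq \tau\} \cup \mathit{Rest}[\alpha \mapsto \tau])$. Suppose $\mu$ unifies every constraint in $\{\alpha \doteq \tau\} \cup \mathit{Rest}[\alpha \mapsto \tau]$. Then $\mu(\alpha) = \mu(\tau)$, so $\alpha$ and $\tau$ are interchangeable under $\mu$. The key sublemma I would establish is that for any type expression $\sigma$, $\mu(\sigma[\alpha \mapsto \tau]) = \mu(\sigma)$; this is immediate by induction on $\sigma$, using $\mu(\alpha) = \mu(\tau)$ at the base case where $\sigma = \alpha$, and commuting substitution with constructors at the inductive step. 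Applying this sublemma to both sides of any constraint $\tau_1 \doteq \tau_2 \in \mathit{Rest}$ shows that $\mu(\tau_1) = \mu(\tau_1[\alpha\mapsto\tau]) = \mu(\tau_2[\alpha\mapsto\tau]) = \mu(\tau_2)$, so $\mu$ unifies the original $\tau_1 \doteq \tau_2$. The constraint $\alpha \doteq \tau$ is present unchanged in $T'$, hence also satisfied.

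Finally, rules 7, 8, 10, 11 for the equality constraints in $C$ are handled by exactly the same arguments with $\theta$, terms, and variables $X$ in place of $\mu$, types, and type variables $\alpha$, and the sublemma $\theta(s[X \mapsto t]) = \theta(s)$ whenever $\theta(X) = \theta(t)$. Since in each step only one of $C, T$ is modified while the other is left unchanged, the claim for the untouched component is immediate, completing the case analysis.
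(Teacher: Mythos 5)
Your proof is correct and follows essentially the same route as the paper's: a case analysis over the rewrite rules, discarding the failure rules, with the only substantive case being rule 5 (and 11), where your sublemma $\mu(\sigma[\alpha\mapsto\tau])=\mu(\sigma)$ when $\mu(\alpha)=\mu(\tau)$ is exactly the paper's observation that $(\mu\circ[\alpha\mapsto\tau])(T)=\mu(T)$. The paper likewise dispatches the term-constraint rules by symmetry, so there is nothing to add.
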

\begin{proof}
We will prove this by case analysis.
\begin{enumerate}
    \item $C\prime$ and $C$ are equal so, trivially, any substitution $\theta$ that unifies each equality constraint in $C$ also unifies each equality constraint in $C\prime$. Now suppose that $\mu$ is a type substitution such that $\mu(\tau_i) = \mu(\tau_i\prime)$, for $i = 1,\dots,n$, then, also $\mu(f(\tau_1,\dots,\tau_n)) = \mu(f(\tau\prime_1, \dots, \tau\prime_n))$. All other type equality constraints in $T$ are also in $T\prime$, so any unifier of $T\prime$ is a unifier of $T$.

    \item $C\prime$ and $C$ are equal so, trivially, any substitution $\theta$ that unifies each equality constraint in $C$ also unifies each equality constraint in $C\prime$. All type equality constraints in $T\prime$ are also in $T$, so all unifiers of $T\prime$ are unifiers of that subset of $T$. Moreover, $T$ has one more type equality constraint $\tau\doteq \tau$, but any substitution, in particular any unifier of $T\prime$ is also a unifier of $\tau$ with itself.

    \item This case does not apply, since the output is $wrong$.

    \item $C\prime$ and $C$ are equal so, trivially, any substitution $\theta$ that unifies each equality constraint in $C$ also unifies each equality constraint in $C\prime$. Any unifier of $T\prime$ is also a unifier of $T$, since swapping the terms on a type equality constraints does not change the fact that a substitution is a unifier.

    \item $C\prime$ and $C$ are equal so, trivially, any substitution $\theta$ that unifies each equality constraint in $C$ also unifies each equality constraint in $C\prime$. Suppose $\mu$ is a unifier of $T\prime$, then $\mu(\alpha) = \mu(\tau)$. Therefore, since $T\prime = T[\alpha \mapsto \tau]$ for all constraints except $\alpha \doteq \tau$, then $\mu(T\prime) = \mu(T[\alpha \mapsto \tau]) = (\mu \circ [\alpha \mapsto \tau])(T)$ but since $\mu(\alpha) = \mu(\tau)$, then $(\mu \circ [\alpha \mapsto \tau])(T) = \mu(T)$. So $\mu$ is also a unifier of $T$.

    \item This case does not apply, since the output is $wrong$.
\end{enumerate}
The proof for the rest of the cases is similar to the proof for the cases 1 to 6, except we replace type equality constraints with equality constraints, type substitution with substitution, and $wrong$ with $false$.
\end{proof}

\begin{lemma}[ - Self-satisfiability]\label{SelfSat}
Suppose $C$ is a set of equality constraints in normal form. Then, $C$ can be interpreted as a substitution $\theta$, and $\theta$ is a unifier of all constraints in $C$. Same can be said for a set of type equality constraints in normal form $T$.
\end{lemma}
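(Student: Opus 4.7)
The plan is to unpack the normal-form condition and then verify the unifier property equation-by-equation, with essentially the same argument repeated for the type side.

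First, I would make the ``interpretation as a substitution'' precise. If $C = \{X_1 = t_1, \ldots, X_n = t_n\}$ is in normal form, then by definition the $X_i$ are pairwise distinct (otherwise one of them would occur elsewhere in the set, namely on a different left-hand side), so the assignment $\theta = [X_1 \mapsto t_1, \ldots, X_n \mapsto t_n]$ is a well-defined substitution.

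Second, I would verify that $\theta$ unifies each constraint $X_i = t_i$. The key observation, which is essentially just the normal-form condition restated, is that no $X_j$ (for any $j$, including $j=i$) occurs in the term $t_i$: the normal form explicitly forbids any occurrence of $X_j$ anywhere other than as its own left-hand side. Consequently $\theta$ acts as the identity on $t_i$, giving $\theta(t_i) = t_i$. On the other hand $\theta(X_i) = t_i$ directly from the definition of $\theta$. Therefore $\theta(X_i) = \theta(t_i)$, which is exactly what is needed for $\theta$ to be a unifier of $X_i = t_i$. Since this holds for every $i$, $\theta$ unifies every constraint in $C$.

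Third, the type case is entirely analogous: from $T = \{\alpha_1 \doteq \tau_1, \ldots, \alpha_m \doteq \tau_m\}$ in normal form I define the type substitution $\mu = [\alpha_1 \mapsto \tau_1, \ldots, \alpha_m \mapsto \tau_m]$, observe that no $\alpha_j$ occurs in any $\tau_i$, conclude $\mu(\tau_i) = \tau_i = \mu(\alpha_i)$, and obtain that $\mu$ unifies every type constraint in $T$. The statement is essentially a direct unpacking of the definition of normal form; there is no real obstacle, the only point deserving care is making sure one uses the full strength of the condition (no $X_i$ anywhere else in the set, in particular not in $t_i$ itself), as this is what rules out cycles and ensures $\theta$ is idempotent on the right-hand sides.
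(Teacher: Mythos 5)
Your proof is correct and follows essentially the same route as the paper's: interpret the normal-form set as the substitution $[X_1 \mapsto t_1,\ldots,X_n\mapsto t_n]$, use the fact that no $X_j$ occurs in any $t_i$ to get $\theta(t_i)=t_i$ while $\theta(X_i)=t_i$, and repeat verbatim for types. The extra remark about the $X_i$ being pairwise distinct (so $\theta$ is well-defined) is a small but welcome refinement over the paper's version.
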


\begin{proof}
If $C$ is in normal form, then $C = \{X_1 = t_1,\dots, X_n = t_n\}$, where $X_i$ is a variable and none of $X_i$ occurs in any $t_i$. So, when we interpret $C$ as a substitution $\theta$, we will have $\theta = [X_1 \mapsto t_1,\dots,X_n \mapsto t_n]$. When we apply $\theta$ to each constraint in $C$, we will get $\theta(C) = \{\theta(X_1) = \theta(t_1), \dots, \theta(X_n) = \theta(t_n)\}$, but since none of the variables $X_i$ occur in any $t_i$, then $\theta(t_i) = t_i$. Moreover, $\theta(X_i) = t_i$. So we get $\theta(C) = \{ t_1 = t_1, \dots t_n = t_n\}$. Therefore, $\theta$ is a unifier of all constraints in $C$. The proof for type equality constraints is similar to this one, replacing substitutions with type substitutions and terms with type terms.
\end{proof}\\

We are now ready to prove the following theorem that proves that the algorithm outputs a semantically correct value.

\begin{theorem}[ - Soundness of the Typed Unification Algorithm]
Let $t_1$ and $t_2$ be the input to the typed unification algorithm, and $\Gamma \vdash t_1 = t_2 ~|~C~|~T$. Suppose $(C,T) \to^{*} R$.
\begin{enumerate}
\item If $R = (\theta,\mu)$, a pair of substitutions for terms and types respectively, then $\theta, \mu \models C,T$.

\item If $R = false$, then there is no substitution $\theta$ such that $\theta \models C$, but there is a type substitution $\mu$ such that $\mu \models T$.

\item If $R = wrong$, then there is no type substitution $\mu$ such that $\mu \models T$.
\end{enumerate}
\end{theorem}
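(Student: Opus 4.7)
I will handle the three clauses together by induction on the length of the derivation $(C,T) \to^{*} R$, using Lemma \ref{consistent} to transport unifiers backward along a step and Lemma \ref{SelfSat} to produce them at the terminal normal form, together with the following bridge from syntactic unification to the semantic relation $\models$: given $\Sigma \sim \theta$, if $\theta(t_1) = \theta(t_2)$ syntactically then $|[t_1|]_{I,\Sigma} = |[\theta(t_1)|]_{I,\Sigma} = |[\theta(t_2)|]_{I,\Sigma} = |[t_2|]_{I,\Sigma}$, and symmetrically for $\mu$ on the type side. For case 1, the terminating pair has no applicable rule, so its two components $C\prime$ and $T\prime$ are in normal form; Lemma \ref{SelfSat} gives that $\theta$ and $\mu$, read off from $C\prime$ and $T\prime$, unify themselves, and a backward induction using Lemma \ref{consistent} keeps them unifiers of every intermediate configuration, in particular of $C$ and $T$; the bridge then promotes this to $\theta,\mu \models C,T$.

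For cases 2 and 3 the key structural observation is that the twelve rules are tried in order, so rules 1--6 on type constraints are fully exhausted before any of rules 7--12 on term constraints can fire. Consequently, $wrong$ can only come from rules 3 or 6, and $false$ only from rules 9 or 12, with a normal form $T\prime$ of the type side already produced in the latter case. The existence of $\mu$ with $\mu \models T$ in case 2 is then obtained by applying the case 1 argument to the purely type-reducing prefix of the derivation. For the non-existence claims, a direct inspection of each failing rule shows that its redex is intrinsically non-unifiable: in rules 3 and 9 a clash on head symbols or arities cannot be repaired by any substitution, which always preserves both; in rules 6 and 12 an equation $\alpha \doteq \tau$ (resp.\ $X = t$) with $\alpha$ (resp.\ $X$) strictly inside $\tau$ (resp.\ $t$) fails because $\mu(\tau)$ has $\mu(\alpha)$ as a proper subtree, so the two cannot coincide.

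The main obstacle is that Lemma \ref{consistent} only propagates unifiers from reduct to redex, whereas cases 2 and 3 need the \emph{converse}: absence of a unifier at the last step must entail its absence at the original $(C,T)$. The cleanest remedy is to check that each of the non-failure rewrites preserves the solution set in both directions, which follows from exactly the equivalences used in the proof of Lemma \ref{consistent} --- for instance rule 5 relies on the identity $\mu(s[\alpha \mapsto \tau]) = \mu(s)$ whenever $\mu(\alpha) = \mu(\tau)$, which is an iff. A minor secondary point is that non-existence is stated for the semantic relation $\models$ and not for syntactic unification; but for any putative $\theta$ failing to syntactically unify some $t_1 = t_2 \in C$, one can pick a state $\Sigma \sim \theta$ that sends the remaining variables to pairwise distinct ground Herbrand values, which forces $|[t_1|]_{I,\Sigma} \neq |[t_2|]_{I,\Sigma}$ and hence $\theta \not\models C$; the type-side argument for case 3 is entirely analogous.
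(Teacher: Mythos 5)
Your proposal is correct and reaches the same conclusions, but it does noticeably more work than the paper in two places, and the comparison is worth recording. For clause 1 you follow the paper exactly: Lemma \ref{SelfSat} at the terminal normal form plus a backward induction with Lemma \ref{consistent}. For clauses 2 and 3, however, the paper simply cites completeness of the Martelli--Montanari algorithm as a black box (``if there were a unifier it would have been obtained''), whereas you prove the needed fact directly: you correctly observe that Lemma \ref{consistent} only transports unifiers from reduct to redex, that the non-existence claims need the converse, and you close that gap by checking that each non-failure rule preserves the solution set in both directions and that each failing redex (symbol/arity clash, occurs check) is intrinsically non-unifiable. Your use of the rule ordering to argue that $T$ is already in normal form when \textit{false} is emitted is exactly the paper's ``we got to the second part'' remark, made precise. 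You also supply a bridge the paper leaves entirely implicit: the theorem is stated for the semantic relation $\models$, not for syntactic unifiability, and your argument via $\Sigma \sim \theta$ (and, for the negative direction, a state sending residual variables to distinct ground values) is needed to connect the two; the paper's proof silently conflates ``$\theta$ is a unifier of $C$'' with ``$\theta \models C$''. The only soft spot is that last negative step: a state witnessing $|[t_1|]_{I,\Sigma} \neq |[t_2|]_{I,\Sigma}$ requires that distinct ground instances denote distinct values, which can fail when both sides evaluate to \textit{wrong}; but the paper does not address this either, so your proof is at least as rigorous as the original.
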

\begin{proof}
The proof for (1) follows from Lemmas \ref{consistent} and \ref{SelfSat}. We get that $\theta$ and $\mu$ are unifiers of $C$ and $T$, respectively.

The proof for (2) follows from the fact that the Martelli-Montanari algorithm is complete, i.e., if there was a unifier for the equality constraint set $C$, then it would have been obtained. Therefore there is no unifier for $C$, so there is no $\theta$ such that $\theta \models C$. However, since we got to the second part of the algorithm, then we were able to find a unifier for the type equality constraints. This means that there is at least one unifier for $T$, so there is a $\mu$ such that $\mu \models T$.

The proof for (3) follows form the fact that the Martelli-Montanari algorithm is complete, i.e., if there was a unifier for the type equality constraint set $T$, then it would have been obtained. Therefore there is no unifier for $T$, so there is no $\mu$ such that $\mu \models T$.
\end{proof}\\

We also prove that the unification algorithm outputs principal typings when it succeeds.
\begin{theorem}[ - Completeness of the Typed Unification Algorithm]
Let $t$ be term, or a unification of two terms, $\Gamma$ be a generic context, and $\Delta$ be type assumptions for constants and function symbols. If $\Gamma, \Delta \vdash t : \tau~|~C~|~T$ and $(C,T) \to^{*} (\theta,\mu)$. Then $(\mu(\Gamma),\mu(\tau))$ is a principal typing of $t$.
\end{theorem}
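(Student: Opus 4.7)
The plan is to unpack the definition of principal typing, which requires establishing two things: (a) that $(\mu(\Gamma), \mu(\tau))$ is a derivable typing, that is, $\mu(\Gamma), \Delta \vdash t : \mu(\tau)$, and (b) that for any other derivable pair $(\Gamma', \tau')$ with $\Gamma', \Delta \vdash t : \tau'$, there exists a type substitution $\mu'$ such that $\mu'(\mu(\Gamma)) = \Gamma'$ and $\mu'(\mu(\tau)) = \tau'$.

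Part (a) follows almost directly from Theorem \ref{SoundCG}. By the Soundness of the Typed Unification Algorithm, $(C,T) \to^{*} (\theta,\mu)$ implies $\mu \models T$, so combining with $\Gamma, \Delta \vdash t : \tau ~|~C~|~T$, Theorem \ref{SoundCG} yields $\mu(\Gamma), \Delta \vdash t : \mu(\tau)$.

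For part (b), I would first establish a companion lemma, Completeness of Constraint Generation: if $\Gamma, \Delta \vdash t : \tau ~|~C~|~T$ is derived starting from a generic context $\Gamma$, and $\Gamma', \Delta \vdash t : \tau'$ is any derivable typing in the syntactic type system, then there exists a type substitution $\mu''$ such that $\mu'' \models T$, $\mu''(\Gamma) = \Gamma'$, and $\mu''(\tau) = \tau'$. This lemma is proved by induction on the derivation of $\Gamma', \Delta \vdash t : \tau'$, matching each type rule with its constraint-generation counterpart. The VAR/GVAR case is immediate since the generic context binds $X : \alpha_X$ and one sets $\mu''(\alpha_X)$ to match the type assigned by the given derivation. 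The CST/GCST and CPL/GCPL cases require mapping the freshly introduced variables $\vec{\beta}$ from GCST and GCPL to the specific instantiation $\vec{\sigma}$ used in CST and CPL; the type equality constraints $\tau'_i \doteq \tau_i[\vec{\alpha} \mapsto \vec{\beta}]$ generated by GCPL are then satisfied by induction, since the subterm types in the given derivation must agree with the declared domain types under that instantiation. The EQU/GEQU case is analogous, with the generated constraint $\tau_1 \doteq \tau_2$ forcing the equality of the subterm types as enforced by the EQU rule.

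Once this lemma is in hand, the argument concludes by invoking the most-general-unifier property of the Martelli--Montanari algorithm on the type-constraint phase of the solver: since $\mu''$ is a unifier of $T$ and $\mu$ is the most general one produced by rules $1$--$6$, there exists $\mu'$ with $\mu'' = \mu' \circ \mu$. Then $\mu'(\mu(\Gamma)) = \mu''(\Gamma) = \Gamma'$ and $\mu'(\mu(\tau)) = \mu''(\tau) = \tau'$, which is exactly the principal typing condition. The main obstacle is the Completeness of Constraint Generation lemma, specifically the GCPL case: one has to match the fresh type variables $\vec{\beta}$ against the polymorphic instantiation $\vec{\sigma}$ chosen in the assumed derivation, and then combine this with the inductive substitutions from the subterms into a single consistent $\mu''$. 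Everything else, including the MGU transfer, is standard given the classical properties of Martelli--Montanari.
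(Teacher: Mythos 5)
Your proposal is correct, but it is organised quite differently from the paper's own proof. The paper proves the theorem directly by structural induction on $t$: in each case it first re-derives $\mu(\Gamma),\Delta\vdash t:\mu(\tau)$ by hand, and then argues principality contrapositively --- any typing $(\mu'(\Gamma),\mu'(\tau))$ that is not an instance of $(\mu(\Gamma),\mu(\tau))$ must, because $\mu$ is a most general unifier of the generated type constraints, violate either some $T_i$ or some constraint $\tau'_i\doteq\tau_i[\vec{\alpha}\mapsto\vec{\beta}]$, and in either case the corresponding rule of the type system cannot be applied. You instead factor the argument into two reusable pieces: derivability of $(\mu(\Gamma),\mu(\tau))$ is obtained for free from Theorem \ref{SoundCG} together with the fact that the solver returns a unifier of $T$, and principality is reduced to a separate ``completeness of constraint generation'' lemma (every derivable typing $(\Gamma',\tau')$ arises from some unifier $\mu''$ of $T$ applied to the generic context and generated type), after which a single appeal to the most-general-unifier property of the Martelli--Montanari phase yields $\mu''=\mu'\circ\mu$. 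This is the standard Wand/Hindley--Milner-style decomposition; it buys you a cleaner statement of what the constraint generation phase guarantees, makes the soundness/completeness pair for constraint generation symmetric, and quantifies over genuinely arbitrary alternative typings $(\Gamma',\tau')$ rather than only those of the form $(\mu'(\Gamma),\mu'(\tau))$ as the paper's contrapositive implicitly does (for a generic context this is harmless for $\Gamma'$, but your route makes it explicit for $\tau'$ as well). The paper's inlined induction, in exchange, avoids stating and proving the extra lemma. One caveat you inherit from the paper's definition of principal typing: requiring $\mu''(\Gamma)=\Gamma'$ on the nose only works if $\Gamma'$ types exactly the same variables as the generic context; otherwise it should be read up to restriction to the variables of $t$.
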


\begin{proof}
We will prove by structural induction on $t$.
\begin{itemize}
    \item If $t$ is a variable $X$, then $(X : \alpha) \in \Gamma$. We know that $\Gamma, \Delta \vdash X : \alpha~|~\emptyset~|~\emptyset$ by a single application of GVAR. $(\emptyset,\emptyset) \to ([~],[~])$, where $[~]$ are each the identity substitution for variables and type variables, respectively. Therefore $[~](\alpha) = \alpha$, and any type $\tau$ derived in the type system such that $(X:\tau) \in \Gamma\prime$, then $\tau$ is an instance of $\alpha$.

    \item If $t$ is a constant $k$, then $(k : \forall \vec{\alpha}. \tau) \in \Delta$. We know that $\Gamma, \Delta \vdash k : \tau[\vec{\alpha} \mapsto \vec{\sigma}]$, for any $\vec{\sigma}$. We get by a single application of rule GCST that $\Gamma, \Delta \vdash k : \tau[\vec{\alpha} \mapsto \vec{\beta}]~|~\emptyset~|~\emptyset$. $(\emptyset,\emptyset) \to ([~],[~])$, where $[~]$ are each the identity substitution for variables and type variables, respectively. Therefore $[~](\tau[\vec{\alpha} \mapsto \vec{\beta}]) = \tau[\vec{\alpha} \mapsto \vec{\beta}]$, and we known that any $\vec{\sigma}$ is an instance of $\vec{\beta}$.

    \item If $t$ is a complex term $f(t_1,\dots,t_n)$, then $(f : \forall \vec{\alpha}. \tau_1 \times \dots \times \tau_n \to \tau\prime) \in \Delta$. By the induction hypothesis, we know that if $(\emptyset,T_i) \to^{*} ([~],\mu_i)$, then $(\mu_i(\Gamma),\mu_i(\tau\prime_i))$ is a principal typing of $t_i$.
Now suppose $(\emptyset,T_1\cup \dots \cup T_n \cup \{\tau\prime_1 \doteq \tau_1[\vec{\alpha} \mapsto \vec{\beta}], \dots, \tau\prime_n \doteq \tau_n[\vec{\alpha} \mapsto \vec{\beta}] \}) \to^{*} ([~],\mu)$. We know that $\mu(\tau\prime_i) = \mu(\tau_i[\vec{\alpha} \mapsto \vec{\beta}])$ for all $i = 1,\dots,n$. So we can derive $\mu(\Gamma),\Delta \vdash t_i : \mu(\tau_i[\vec{\alpha} \mapsto \vec{\beta}])$, and by a single application of rule CPL, we get $\mu(\Gamma),\Delta \vdash f(t_1,\dots,t_n) : \mu(\tau[\vec{\alpha} \mapsto \vec{\beta}])$.
Now we need to prove that this typing $(\mu(\Gamma),\mu(\tau[\vec{\alpha} \mapsto \vec{\beta}]))$ is the principal typing. Suppose we had another typing that was not an instance of this one $(\mu\prime(\Gamma),\mu\prime(\tau[\vec{\alpha} \mapsto \vec{\beta}]))$. Since $\mu$ is an MGU of $T_1\cup \dots \cup T_n \cup \{\tau\prime_1 \doteq \tau_1[\vec{\alpha} \mapsto \vec{\beta}], \dots, \tau\prime_n \doteq \tau_n[\vec{\alpha} \mapsto \vec{\beta}] \}$, then either for some $i$ $\mu\prime \nvDash T_i$, or for some $i$ $\mu\prime \nvDash \tau\prime_i \doteq \tau_i[\vec{\alpha} \mapsto \vec{\beta}]$. If the former is true, then $(\mu\prime(\Gamma),\mu(\prime(\tau\prime_i)))$ is not an instance of the principal typing for $t_i$ and therefore cannot be derived in the type system. If the latter is true, then $\mu\prime(\tau_i[\vec{\alpha} \mapsto \vec{\beta}]) \neq \mu\prime(\tau\prime_i)$ and we cannot use the rule CPL in the type system. Therefore, $(\mu(\Gamma),\mu(\tau[\vec{\alpha} \mapsto \vec{\beta}]))$ is the principal typing for $f(t_1,\dots,t_n)$.

    \item Suppose $t$ is an equality $t_1 = t_2$. By the induction hypothesis, we know that if $(\emptyset,T_i) \to^{*} ([~],\mu_i)$, then $(\mu_i(\Gamma),\mu_i(\tau_i))$ is a principal typing of $t_i$. Now suppose $(\{t_1 = t_2\},T_1\cup T_2\cup\{\tau_1 \doteq \tau_2) \to^{*} (\theta,\mu)$. We know that $\mu(\tau_i)$ is an instance of $\mu_i(\tau_i)$, so we can derive $\mu(\Gamma),\Delta \vdash t_i : \mu(\tau_i)$ in the type system. By a single application of rule EQU, we get $\mu(\Gamma),\Delta \vdash t_1 = t_2 : bool$. So $(\mu(\Gamma),\mu(bool))$ is a typing. Suppose it was not the principal typing. Suppose we had another typing that was not an instance of this one $(\mu\prime(\Gamma),\mu\prime(bool))$. For all $\mu$, $\mu(bool) = bool$. Since $\mu$ is an MGU of $T_1 \cup T_2 \cup \{\tau_1 \doteq \tau_2\}$, so if $\mu\prime$ is not an instance, then either for some $i$ $\mu\prime \nvDash T_i$ or $\mu\prime \nvDash \tau_1 \doteq \tau_2$. If the former is true, then $(\mu\prime(\Gamma),\mu\prime(\tau_1))$ is not an instance of its principal typing, so it cannot be derived in the type system. If the latter is true, the types for $t_1$ and $t_2$ are different and we cannot apply rule EQU, so we could not have this derivation in the type system. Therefore, $(\mu(\Gamma),\mu(bool))$ is the principal typing for $t_1 = t_2$.
\end{itemize}
\end{proof}

\section{Final Remarks} 
Our regular typed unification algorithm provides some foundation for the use of regular types to dynamically catch erroneous Prolog behaviors. Indeed, one of the original motivations for this work was to understand how to extend the YAP Prolog system \cite{costa2012yap} with an effective dynamic typing.
In \cite{BarbosaFloridoCosta22} we proposed a typed SLD-resolution (TSLD) which used our previous notion of typed unification. Our goal now is to effectively extend SLD resolution with unification of terms typed by deterministic regular types.
A TSLD-tree branch may result in \textit{true}, \textit{false}, or \textit{wrong}, depending on the same results for the unifications in the branch. In \cite{BarbosaFloridoCosta22}, each TSLD-tree branch where a unification outputs \textit{false}, needed to continue execution on the same branch in order to check if there was a type error in some other atom in the query.
This leads to a drastic increase in the runtime of programs.
\begin{example}\label{ex7}
Consider the following (unrealistic) but possible program:
\begin{verbatim}
p(0).
\end{verbatim}
and query: \texttt{?- p(1),\dots,p(900),p(a)}.
In Prolog SLD-resolution the query fails after one SLD-step. 
In the TSLD-resolution defined in \cite{BarbosaFloridoCosta22}, since the first 900 queries return $false$, one needs to reach step 901 in order to obtain the value $wrong$.
\end{example}

We argue that when adding regular typed unification to Prolog we must have a compromise between completeness and efficiency. 
If the evaluation of a query is $false$ we stop execution, and the same happens for $wrong$. However, if the result is $false$ and there are other atoms in the query, we cannot assure that the value for that branch is indeed $false$, only that it is not $true$. Thus, in our extension to Prolog we output \textbf{no}(?) in these cases. On the other hand, we output \textbf{no}(false) if there are no other atoms in the query, and  \textbf{no}(wrong) if the branch ends on $wrong$.
Note that in many programs, for some queries, we are \textit{always} able to detect the type error.
\begin{example}
Consider the predicate that calculates the length of a list:
\begin{verbatim}
length([], 0).
length([_|T],N) :- length(T,N1), N is N + 1.
\end{verbatim}
One typical bug is to swap the arguments of a predicate.
Now note that, in this case, if we have the erroneous query \texttt{?- length(3,[a,b,c])}, both branches of the TSLD-tree output \textit{wrong} since there is a type error in the first argument (and also in the second).
\end{example}

\section{Related Work}

This paper generalizes a typed unification algorithm previously defined by the authors in \cite{BarbosaFloridoCosta22} that was used in the dynamic typing of logic programs. In \cite{BarbosaFloridoCosta22}, functions symbols $f$ of arity $n$ had co-domains which were always sets of terms of the form $f(t_1,\dots,t_n)$, where the arguments $t_i$ belong to the corresponding domain of $f$. Here we extend this notion enabling the use of semantic domains and co-domains described by deterministic regular types allowing a non-Herbrand interpretation of function symbols.

The most obvious related work is many-sorted unification \cite{Walther1988}, though many-sorted unification assumes an infinite hierarchy of sorts and we do not assume a hierarchy of types. In particular there is a relation with many-sorted unification with a forest-structured sort hierarchy \cite{Walther1988}, but even compared with this strong restricted unification problem, our work gives easier and nicer results, mostly due to the use of an expressive universe partition based on regular types but with no underlying hierarchy on the domains. 

Here we study unification of terms interpreted in domains described by regular types, and we allow a form of parametric polymorphism in the description of term variables. Parametric polymorphic descriptions of sorted domains goes back to Smolka generalized order-sorted logic \cite{Smolka1988}. In his system, subsort declarations are propagated to complex type expressions, thus the main focus is on subtyping which is not the scope of our work.

Dart and Zobel \cite{DBLP:books/mit/pfenning92/DartZ92} provided an algorithm for regular type unification, generating a type unifier. Due to problems related to tuple distributivity, not all types had a most general type unifier. In consequence, unification returned a weak type unifier. However, the question whether unification returned a minimal weak type unifier was unknown and left as an open question.

In \cite{HANUS199163}, there is a typed unification algorithm used in a typed operational semantics for logic programming. The main difference to our work is that in  \cite{HANUS199163} failing unification due to ill-typedness is not detected with a different value and it is not different from a well-typed failed unification.

A data type reconstruction algorithm was previously defined in \cite{SchrijversB06} based on equations and inequations constraints. This was also applied to logic programs (terms and predicates). Here we focus on term unification, thus equality is the only predicate, and this rather simplifies our type system. \\

{\bf Acknowledgements} This work was partially financially supported by UIDB/00027/2020 of the Artificial Intelligence and Computer Science Laboratory, LIACC, funded by national funds through the FCT/MCTES (PIDDAC).

\bibliography{bibliography}

\end{document}